
\documentclass[12pt]{article}
\usepackage{eurosym}
\usepackage{amssymb}
\usepackage{amsmath}
\usepackage{cite}

\setcounter{MaxMatrixCols}{10}

\newtheorem{theorem}{Theorem}

\newtheorem{corollary}[theorem]{Corollary}

\newenvironment{proof}[1][Proof]{\noindent\textbf{#1.} }{\ \rule{0.5em}{0.5em}}
\topmargin=-1cm \textwidth=17cm \textheight=23cm \evensidemargin=0cm
\oddsidemargin=0cm

\topmargin=-1cm
\textwidth=16cm
\textheight=22cm
\evensidemargin=0cm
\oddsidemargin=0cm
\begin{document}

\title{Geometric Linearization for Constraint Hamiltonian Systems}
\author{Andronikos Paliathanasis $^{1,2,3}$ \thanks{%
Email: anpaliat@phys.uoa.gr} \\
{\ } {\ \textit{$^1$ Institute of Systems Science, Durban University of
Technology }}\\
{\ \textit{PO Box 1334, Durban 4000, Republic of South Africa}}\newline
\\
{\ } {\ \textit{$^2$ School for Data Science and Computational Thinking, Stellenbosch University}}\\
{\ \textit{44 Banghoek Rd, Stellenbosch 7600, South Africa}}\newline
\\
{\ }\textit{$^3$ Departamento de Matem\'{a}ticas, Universidad Cat\'{o}lica del
Norte,} \\
\textit{Avda. Angamos 0610, Casilla 1280 Antofagasta, Chile}}
\maketitle

\begin{abstract}
This study investigates the geometric linearization of constraint
Hamiltonian systems using the Jacobi metric and the Eisenhart lift. We
establish a connection between linearization and maximally symmetric
spacetimes, focusing on the Noether symmetries admitted by the constraint
Hamiltonian systems. Specifically, for systems derived from the singular
Lagrangian
\begin{equation*}
L\left( N,q^{k},\dot{q}^{k}\right) =\frac{1}{2N}g_{ij}\dot{q}^{i}\dot{q}%
^{j}-NV(q^{k}),
\end{equation*}%
where $N$ and $q^{i}$ are dependent variables and $\dim g_{ij}=n$, the
existence of $\frac{n\left( n+1\right) }{2}$ Noether symmetries is shown to
be equivalent to the linearization of the equations of motion. The
application of these results is demonstrated through various examples of
special interest. This approach opens new directions in the study of
differential equation linearization.

Keywords: Constraint Hamiltonian systems; Lie symmetries; Linearization;
\end{abstract}

\section{Introduction}

\label{sec1}

The theory of invariant infinitesimal transformations, known as Lie symmetry
analysis \cite{lie1,lie2,lie3}, is a powerful tool for the analytic
treatment of nonlinear differential equations. The concept of symmetry is
based on the existence of invariant functions in the solution space of the
differential equations \cite{ibra}. These invariant functions are mainly
used to simplify differential equations and to derive solutions \cite%
{Stephani,olver,Bluman}. Lie symmetry analysis systematically treats
dynamical systems and has been widely explored in various areas of applied
mathematics, see for instance \cite%
{sw7,sw3,hl0,hl0a,sw15,ami2,hl1,b2,b3,oli,prd,sja} and references therein..

Symmetry in physics is crucial because it is related to conservation laws
\cite{sa1,sa2,sa3}. The conservation laws of momentum for free particles,
energy for conservative systems, Kepler's laws of motion, and many others
are connected to infinitesimal transformations that keep the differential
equations invariant \cite{Stephani}. The existence of a sufficiently large
number of Lie symmetries for a given differential equation allows for
solving the equation by repeated order reduction using similarity
transformations or by determining a sufficient number of first integrals
\cite{olver}. Additionally, Lie symmetries can be used to classify dynamical
systems through their admitted invariant functions \cite{Bluman}.

This latter characteristic is important for the concept of global
linearization of differential equations via point transformations.
Specifically, nonlinear differential equations can be linearized if they
admit a specific number of Lie symmetries \cite{ibra}, as will be detailed
in the following sections. In this study, we focus on the global
linearization of a family of constraint Hamiltonian systems. We derive a set
of geometric criteria that allow us, for the first time, to express
nonlinear dynamical systems with fewer symmetries in a linear form.

We employ two different methods for the geometrization of non-relativistic
dynamical systems with conservative forces, the Einsenhart-Duval lift and
the Jacobi metric. In the Jacobi metric approach the conservative force is
absorbed in the kinetic terms, such that the dynamical effects of the force
to be part of the new geometry which describes the dynamical system. This is
achieved by a reparametrization of the independent variables. There are
various applications of Jacobi metric for the study of dynamical systems
\cite{jm0,jm1,jm2}, while the method has been generalized for the
description of relativistic systems \cite{jm3,jm4}. For a more general
discussion we refer the reader to \cite{pr1}.

On the other hand, in the Eisenhart-Duval method the conservative forces are
embedded into an extended geometry by introducing new independent variables
and degrees of freedom. The method was introduced by Eisenhart in early of
the twentieth century \cite{el1} and it was rediscovered latter by Duval et
al. in \cite{duval}. The Eisenhart-Duval method is the nonrelativistic case
of the Kaluza-Klein framework \cite{kk1,kk2}. Due to the simplicity of the
method, it has been widely applied for the study of nonlinear dynamical
systems. The Eisenhart lift of two-dimensional mechanical systems with or
without varying mass term is discussed in \cite{lf1}. The superintegrability
property of three-dimensional Newtonian systems investigated by using the
Eisenhart lift in \cite{lf1a}. In \cite{lf1b} the Eisenhart lift applied for
the study of the Toda chain for the study of nearest neighbor interacting
particles on a line. The geodesic description of the two fixed centers
problem by using the Eisenhart lift was studied in \cite{lf1c}.

Nevertheless, the Eisenhart lift has been applied not only in classical
mechanics but also for the analysis of the Schr\"{o}dinger \cite{lf3,lf4}
and of the Dirac \cite{lf5} equations of quantum mechanics. There is a
plethora of applications of the Eisenhart lift in relativistic physics and
cosmology, see for instance \cite{lf6,lf7,lf7,lf8} and references therein.
The extended minisuperspace by the Eisenhart lift for the study of quantum
cosmology introduced in \cite{lf9}. Recently, in \cite{lf10} the Eisenhart
lift employed for the derivation of the analytic solutions for the field
equations in scalar field cosmology. It was found that the field equations
can be linearized in the framework of minimally coupled scalar field
cosmological theory for a Friedmann--Lema\^{\i}tre--Robertson--Walker
background geometry and for an exponential potential for the description of
the mass for the scalar field. Moreover, in \cite{lf11} it was found that
the cosmological constant in Friedmann--Lema\^{\i}tre--Robertson--Walker
geometry can be recovered by mean of the application of the hidden
symmetries of for the extended minisuperspace in quantum cosmology.

Furthermore, \cite{lf2} the Eisenhart lift employed to write the equation of
motion for the oscillator in terms of the free particle. That is a well
known result provided by the symmetry analysis which relates the two
dynamical systems by sharing the common symmetry group \cite{fam}. We remark
that by geometrize a given dynamical systems it is feasible to employ
important results from the differential geometry in a systematic way. This
property is considered in the following in order to perform a geometric
linearization for a class of constraint Hamiltonian systems. We shall see
that there exist a relation between the geometric properties of the extended
geometry in the Eisenhart lift with the admitted symmetries of the original
system. The structure of the paper is as follows:

In Section \ref{sec2}, we discuss the mathematical tools primarily applied
in this study. In particular we discuss the Lie symmetry analysis for
differential equations and we focus in the case of second-order
differential. We also present Noether's theorems, which play an important
role in the main analysis of this study. Furthermore, in Section \ref{sec2a}%
, we provide a literature review on the global linearization of differential
equations through Lie symmetry analysis.

The family of dynamical systems considered in this work are introduced in
Section \ref{sec3}. Specifically, we examine a family of dynamical systems
described by a singular Lagrangian function, leading to equations of motion
where the Hamiltonian function is constrained. Additionally, we discuss in
detail two geometrization approaches for dynamical systems within this
family. Section \ref{sec4} includes the main results of this study, where we
derive a new set of geometric conditions and criteria under which constraint
Hamiltonian systems can be transformed into an equivalent system of a free
particle in flat space, thereby becoming linearizable.

We demonstrate the application of this geometric approach in Section \ref%
{sec5}, where we explore the linearization of nonlinear dynamical systems of
special interests in gravitational physics. The geometric linearization of
the Szekeres system with or without the cosmological constant is discussed.
Furthermore we investigate the linearization of the minisuperspace for the
gravitational model in a static spherically symmetric spacetime with a
charge, that is, the exact solution for the Reissner-Nordstr\"{o}m black
hole metric can be constructed by the solution of the free particle. Last
but not least, dynamical systems with interest in Newtonian mechanics are
discussed. Finally, in Section \ref{sec6}, we summarize our results.

\section{Preliminaries}

\label{sec2}

In this Section, we briefly discuss the basic mathematical elements
necessary for this study.

\subsection{Lie symmetries of differential equations}

In the following lines, we review the basic definitions for the Lie symmetry
analysis of ordinary differential equations.

Let us assume the $n$-dimensional dynamical system of $\mu$th-order
differential equations of the form
\begin{equation}
q^{\left( \mu \right) i}=\omega ^{i}\left( t,q^{k},\dot{q}^{k},\ddot{q}%
^{k},...,q^{\left( \mu -1\right) i}\right)  \label{Ls.01}
\end{equation}%
in which $t$ is the independent variables, and $q^{i}$ are the dependent
variables, $q^{i}=\left( q^{1},q^{2},...,q^{N}\right) $. Moreover, a dot
represents the total derivative with respect to the independent variable $t$%
, i.e.,
\begin{equation}
\dot{q}^{i}=\frac{dq^{i}}{dt},~\ddot{q}^{i}=\frac{d^{2}q^{i}}{dt^{2}}%
,~...,~q^{\left( \mu \right) }=\frac{d^{\mu }q}{dt^{\mu }}.
\end{equation}

Consider the infinitesimal transformation
\begin{eqnarray}
\bar{t} &=&t+\varepsilon \xi \left( t,q^{k}\right)  \label{Ls.02} \\
\bar{q}^{i} &=&q^{i}+\varepsilon \eta ^{i}\left( t,q^{k}\right)
\label{Ls.03}
\end{eqnarray}%
with generator the vector field $X=\xi \left( t,q^{k}\right) \partial
_{t}+\eta ^{i}\left( t,q^{k}\right) \partial _{i}$.

The vector field $q^{[\mu ]}$ is the $\mu $th-prolongation of $q$ in the Jet
space $J_{V}=\left\{ t,q^{i},\dot{q}^{i},\ddot{q}^{i},...,q^{\left( \mu
\right) }\right\} $ defined as%
\begin{equation}
X^{\left[ \mu \right] }=X+\eta _{\left[ 1\right] }^{i}\partial _{\dot{q}%
^{i}}+\eta _{\left[ 2\right] }^{i}\partial _{\ddot{q}^{i}}+...+\eta _{\left[
\mu \right] }^{i}\partial _{q^{\left( \mu \right) i}},
\end{equation}%
in which~$\eta _{\left[ 1\right] }^{i}$,~ $\eta _{\left[ 2\right] }^{i}$,
...,$~\eta _{\left[ \mu \right] }^{i}$ are given by the following expressions%
\begin{eqnarray}
\eta _{\left[ 1\right] }^{i} &=&\dot{\eta}^{i}-\dot{q}^{i}\dot{\xi},
\label{ls.04} \\
\eta _{\left[ 2\right] }^{i} &=&\dot{\eta}_{\left[ 2\right] }^{i}-\ddot{q}%
^{i}\dot{\xi},  \label{ls.05} \\
&&...  \label{ls.06} \\
\eta _{\left[ \mu \right] }^{i} &=&\dot{\eta}_{\left[ \mu -1\right]
}^{i}-q^{\left( \mu \right) i}\dot{\xi}.  \label{ls.07}
\end{eqnarray}

Then we say that the system of differential equations (\ref{Ls.01}) will be
invariant under the application of the infinitesimal transformations (\ref%
{Ls.02}), (\ref{Ls.03}) if and only if there exists a function $\lambda $
such that the following condition holds \cite{ibra}

\begin{equation}
\left[ X^{\left[ \mu \right] },A\right] =\lambda A,  \label{ls.08}
\end{equation}%
where%
\begin{equation}
\left[ X^{\left[ \mu \right] },A\right] =X^{\left[ \mu \right] }A-AX^{\left[
\mu \right] }~,
\end{equation}
and operator $A$ is Hamilton's vector%
\begin{equation}
A=\frac{\partial }{\partial t}+\dot{q}^{i}\frac{\partial }{\partial q^{i}}+%
\ddot{q}^{i}\frac{\partial }{\partial \dot{q}^{i}}+...+\omega ^{i}\left(
t,q^{k},\dot{q}^{k},\ddot{q}^{k},...,q^{\left( \mu -1\right) i}\right) \frac{%
\partial }{\partial q^{\left( \mu \right) i}}.  \label{ls.09}
\end{equation}

Condition (\ref{ls.08}) is known as the Lie symmetry condition, and $X$ is a
Lie point symmetry for the dynamical system (\ref{Ls.01}).

Assume that $f^{i}$ is a solution vector for the dynamical system (\ref%
{Ls.01}), that is $Af^{i}=0$, then, the Lie symmetry condition (\ref{ls.08})
becomes $\left[ X^{\left[ \mu \right] },A\right] f=\lambda Af,$ that is,
\cite{ibra}%
\begin{equation}
X\left( Af^{i}\right) =0,  \label{ls.10}
\end{equation}%
or equivalently%
\begin{equation}
\eta _{\left[ \mu \right] }^{i}=X^{\left[ \mu -1\right] }\omega ^{i}\left(
t,q^{k},\dot{q}^{k},\ddot{q}^{k},...,q^{\left( \mu -1\right) i}\right) .
\label{ls.11}
\end{equation}%
The solution of the latter linear system defines the functional form of the
generator vector $X$ for the infinitesimal transformation (\ref{Ls.02}), (%
\ref{Ls.03}).

\subsection{Second-order differential equations}

In this study, we focus on second-order differential equations of the form
\cite{Stephani}
\begin{equation}
\ddot{q}^{i}=\omega ^{i}\left( t,q^{k},\dot{q}^{k}\right) \text{.}
\label{ls.12}
\end{equation}

Therefore, the components $\eta ^{\left[ 1\right] }$,~$\eta ^{\left[ 2\right]
}$ of the second-prolongation reads \cite{Stephani}%
\begin{equation}
\eta ^{\left[ 1\right] }=\eta _{,t}^{i}+q^{i}\left( q^{k}\eta _{,k}^{i}-\xi
_{,t}\right) -\dot{q}^{i}\dot{q}^{k}\xi _{,k},  \label{ls.13}
\end{equation}%
\begin{equation}
\eta _{\left[ 2\right] }^{i}=\eta _{,tt}^{i}+2\left( \eta _{,tk}^{i}-\xi
_{,tt}\right) \dot{q}^{k}+\left( \eta _{,kr}^{i}-2\xi _{,tq}\right) \dot{q}%
^{k}\dot{q}^{r}-\dot{q}^{i}\dot{q}^{k}\dot{q}^{r}\xi _{,kr}+\left( \eta
_{,k}^{i}-2\xi _{,t}-3\xi _{,kr}\dot{q}^{r}\right) \ddot{q}^{k}.
\label{ls.15}
\end{equation}%
thus the symmetry conditions (\ref{ls.11}) becomes \cite{Stephani}%
\begin{eqnarray}
0 &=&\eta _{,tt}^{i}+2\left( \eta _{,tk}^{i}-\xi _{,tt}\right) \dot{q}%
^{k}+\left( \eta _{,kr}^{i}-2\xi _{,tq}\right) \dot{q}^{k}\dot{q}^{r}  \notag
\\
&&-\dot{q}^{i}\dot{q}^{k}\dot{q}^{r}\xi _{,kr}+\left( \eta _{,k}^{i}-2\xi
_{,t}-3\xi _{,kr}\dot{q}^{r}\right) \omega ^{i}\left( t,q^{k},\dot{q}%
^{k}\right)  \notag \\
&&-\omega _{,t}^{i}-\omega _{,k}^{i}\eta ^{k}-\omega _{,\dot{q}%
^{j}}^{i}\left( \eta _{,t}^{j}+q^{j}\left( q^{k}\eta _{,k}^{j}-\xi
_{,t}\right) -\dot{q}^{j}\dot{q}^{k}\xi _{,k}\right) .  \label{ls.16}
\end{eqnarray}

Lie symmetries have numerous applications. They are used to derive invariant
functions and conservation laws. Furthermore, they are applied to categorize
differential equations and establish criteria for when differential
equations are equivalent to a linear differential equation under a point
transformation. In the following lines, we discuss Noether's theorems for
the construction of conservation laws and geometric linearization through
symmetries.

\subsection{Noether's theorems}

Let's turn our attention to the case where the second-order dynamical system
(\ref{ls.12}) arises from the variation of the action integral
\begin{equation}
S=\int L\left( t,q,\dot{q}\right) dt,  \label{ls.17}
\end{equation}%
where $L\left( t,q,\dot{q}\right) \,$\ is the so-called Lagrangian function.

In 1918 \cite{noe}, Emmy Noether published two groundbreaking theorems that
relate the symmetries of the variation principle to conservation laws in
dynamical systems.

The first theorem states that if, under the application of the infinitesimal
transformations (\ref{Ls.02}), (\ref{Ls.03}), there exists a function $f$
such that the following condition holds
\begin{equation}
X^{\left[ 1\right] }L+L\frac{d\xi }{dt}=\frac{df}{dt},  \label{Ns.03}
\end{equation}%
then $X$ is called a Noether symmetry. It is evident that Noether symmetries
are Lie symmetries for the dynamical system (\ref{ls.12}), but the converse
is not necessarily true; that is, Noether symmetries form a subalgebra of
the Lie symmetries for the dynamical system. The function $f$ in (\ref{Ns.03}%
) represents a boundary term introduced to account for infinitesimal changes
in the action integral due to infinitesimal changes in the boundary of the
domain, caused by infinitesimal transformations of the variables in the
action integral.

Noether's second theorem provides a systematic way to derive conservation
laws. In particular, if $X=\xi \partial _{t}+\eta \partial _{i}$ is a
Noether symmetry for the dynamical system (\ref{ls.12}), then the function
\begin{equation}
\Phi \left( t,q^{k},\dot{q}^{k}\right) =\xi \left( \dot{q}^{i}\frac{\partial
L}{\partial \dot{q}^{i}}-L\right) -\eta ^{i}\frac{\partial L}{\partial q^{i}}%
+f,  \label{Ns.04}
\end{equation}%
is a conservation law, that is, $A\left( \Phi \right) =0$; consequently $%
X\left( \Phi \right) =0$.

For a recent extended discussion of Noether's theorems we refer the reader
to \cite{noe1}.

\section{Linearization through symmetries}

\label{sec2a}

One of the significant findings of Lie symmetry analysis is the
linearization criterion for a second-order ordinary differential equation.
The linearization process is crucial because it provides a method to express
the analytic solution in terms of simple functions, thereby offering a
better understanding of the dynamics of the dynamical system.

Sophus Lie's theorem states that if a second-order ordinary differential
equation $\ddot{q}=\omega \left( t,q,\dot{q}\right) $ admits eight Lie
symmetries forming the $sl\left( 3,R\right) $ Lie algebra, then there exists
a transformation that can bring the equation to the form of the free
particle equation $\bar{q}^{\prime \prime }=0$. For further discussion, see
\cite{Leach80a} and references therein.

For third-order ordinary differential equations, various approaches have
been developed to address their linearization. Criteria for linearization
through point and contact symmetries of third-order differential equations
have been established in \cite{mel1, mel2}. Additionally, the Sundman
transformation as a method for linearization has been studied in \cite{sun1,
sun2}. More recently, the Cartan equivalency method was considered in \cite%
{sun3}, where a straightforward procedure was established for the
linearization of third-order differential equations using a four-dimensional
Lie algebra \cite{sun3}.

The linearization of higher-order ordinary differential equations through
Lie symmetries has been extensively discussed in \cite{fam, fam2, fam3} and
related references. However, the linearization of partial differential
equations holds special interest and has yielded many important results, as
seen in \cite{fam4}.

In the context of systems of nonlinear differential equations, the existence
of transformations that linearize these equations is highly significant, as
it offers an alternative approach to studying the integrability of dynamical
systems. Due to the complexity of the problem, various criteria have been
proposed in the literature for the linearization of higher-dimensional
dynamical systems. For example, in \cite{wafo}, it was demonstrated that a
system of second-order differential equations admitting four Lie point
symmetries forming the $A_{4,1}$ or $A_{4,2}$ Lie algebra \cite{patera} can
be transformed into a linear form through a point transformation.
Furthermore, the introduction of complex Lie symmetries has led to new
linearization criteria for systems of second-order ordinary differential
equations, as discussed in the series of studies \cite{ali1, ali2, ali3}.

Nevertheless, the linearization process is inherently a geometric approach.
Therefore, studies in the literature have shown that a system of
differential equations is linearizable based on specific geometric
properties \cite{qa1}. For systems of second-order ordinary differential
equations of the form

\begin{equation*}
\ddot{q}^{i}+\alpha _{srj}^{i}\left( t,q^{k}\right) \dot{q}^{s}\dot{q}^{r}%
\dot{q}^{j}+\beta _{rj}^{i}\left( t,q^{k}\right) \dot{q}^{r}\dot{q}%
^{j}+\gamma _{r}^{i}\left( t,q^{k}\right) \dot{q}^{r}+\delta ^{i}\left(
t,q^{k}\right) =0,
\end{equation*}%
the coefficients $\alpha _{srj}^{i}$,~$\beta _{rj}^{i}$, $\gamma _{r}^{i}$,~$%
\delta ^{i}$ can be related to the connection coefficient of extended
manifold \cite{ami}, where if the connection has zero curvature, that is,
the geometry is flat, then there exist a point transformation which
linearize the latter system, for some applications of this method we refer
the reader in\ \cite{mt1}.

Recently, in \cite{axioms}, a novel approach to geometric linearization for
a family of second-order differential equations was discovered. It was found
that this new family of dynamical systems can be linearized by introducing
new dependent variables. The solution of the extended system can then be
expressed in terms of solutions to linear equations. In a similar vein, it
was discovered that solutions to Einstein's field equations for certain
gravitational models can also be represented in terms of linear equations
\cite{comm}.

In the following section, we establish a new geometric criterion for the
global linearization of a family of constraint Hamiltonian systems.

\section{Constraint Hamiltonian systems}

\label{sec3}

We introduce the Lagrangian function%
\begin{equation}
L\left( N,q^{k},\dot{q}^{k}\right) =\frac{1}{2N}g_{ij}\dot{q}^{i}\dot{q}%
^{j}-NV(q^{k}),  \label{sl.01}
\end{equation}%
where $g_{ij}\left( q^{k}\right) $ is a second-rank tensor with inverse $%
g^{ij}$, and $N\left( t\right) $, $q^{i}\left( t\right) $ are the $\left(
1+n\right) $ degrees of freedom.

For the Lagrangian (\ref{sl.01}) it follows $\det \left\vert \frac{\partial
^{2}L}{\partial Q^{A}\partial Q^{B}}\right\vert =0$, where $Q^{A}=\left(
N\left( t\right) ,q^{i}\left( t\right) \right) $. Hence, the dynamical
system described by the Lagrangian (\ref{sl.01}) is singular.

Variation with respect to the variable $N\left( t\right) $, leads to the
constraint equation $\frac{\partial L}{\partial N}=0$, i.e.%
\begin{equation}
\frac{1}{2N^{2}}g_{ij}\dot{q}^{i}\dot{q}^{j}+V(q^{k})=0.  \label{sl.02}
\end{equation}

We introduce the momentum $p_{i}=\frac{1}{N}g_{ij}\dot{q}^{j}$, thus, the
latter constraint reads%
\begin{equation}
H\left( q,p\right) \equiv \frac{1}{2}g^{ij}p_{i}p_{j}+V(q^{k})=0.
\label{sl.03}
\end{equation}

Function $H\left( q,p\right) $ represents the Hamiltonian of (\ref{sl.01}),
which is constraint due to existence of expression (\ref{sl.03}). Dynamical
systems described by Lagrangians of the form (\ref{sl.01}) are of special
interests in gravitational physics \cite{Rya1} and in other physical
theories (see the discussion in \cite{cont1}). Nevertheless, any regular
Hamiltonian system of the form
\begin{equation}
H_{R}\left( q,p,h\right) \equiv \frac{1}{2}g^{ij}p_{i}p_{j}+U(q^{k})=h,
\label{sl.04}
\end{equation}%
can be written in the singular form by absorbing the integration constant $h$%
, inside the potential, i.e. $U\left( q^{k}\right) =V(q^{k})-h$.

An important characteristic of this family of constrained dynamical systems
is their invariance under time-reparametrization. Below, we demonstrate this
property.

The action integral related to the Lagrangian function (\ref{sl.01}) is%
\begin{equation}
S=\int \left( \frac{1}{2N}g_{ij}\dot{q}^{i}\dot{q}^{j}-NV(q^{k})\right) dt,
\label{sl.05}
\end{equation}%
and the corresponding equations of motion are%
\begin{eqnarray}
\ddot{q}^{i}+\Gamma _{jk}^{i}\dot{q}^{j}\dot{q}^{k}+V^{,i}-2(\ln
N)^{,i}\left( \frac{1}{2}g_{ij}\dot{q}^{i}\dot{q}^{j}+V(q^{k})\right) &=&0,
\label{sl.06} \\
\frac{1}{2N^{2}}g_{ij}\dot{q}^{i}\dot{q}^{j}+V(q^{k}) &=&0,
\end{eqnarray}%
or due to the constraint equation the second-order differential equation
reads
\begin{equation}
\ddot{q}^{i}+\Gamma _{jk}^{i}\dot{q}^{j}\dot{q}^{k}+V^{,i}=0,  \label{sl.07}
\end{equation}
where $\Gamma _{jk}^{i}$ is the Levi-Civita connection related to the tensor
$g_{ij}$.

Under the change of the independent variable $M\left( \tau \right) d\tau =dt$%
, it follows%
\begin{equation}
S=\int \tilde{L}\left( \tilde{N},q^{k},\frac{dq^{k}}{d\tau }\right) d\tau ~,~%
\tilde{N}=N\left( \tau \right) M\left( \tau \right) ,  \label{sl.08}
\end{equation}%
where
\begin{equation}
\tilde{L}\left( \tilde{N},q^{k},\frac{dq^{k}}{d\tau }\right) =\frac{1}{2%
\tilde{N}}g_{ij}\frac{dq^{i}}{d\tau }\frac{dq^{j}}{d\tau }-\tilde{N}%
V(q^{k})~,
\end{equation}%
is the conformally related Lagrangian. The corresponding equations of motion
are derived
\begin{equation}
\frac{d^{2}q^{i}}{d\tau ^{2}}+\Gamma _{jk}^{i}\frac{dq^{j}}{d\tau }\frac{%
dq^{k}}{d\tau }+V^{,i}=0.  \label{sl.09}
\end{equation}%
\begin{equation}
\frac{1}{2N^{2}}g_{ij}\frac{dq^{i}}{d\tau }\frac{dq^{j}}{d\tau }+V(q^{k})=0
\label{sl.10}
\end{equation}%
It is evident that the equations of motion remain invariant under a
time-reparametrization where the parameter$N\left( t\right) $,does not
affect the dynamical behavior of the constrained system. Specifically, the
conformally related Lagrangians $L\left( N,q^{k},\dot{q}^{k}\right) $,~$%
\tilde{L}\left( \tilde{N},q^{k},\frac{dq^{k}}{d\tau }\right) $ have common
equations of motions.

\subsection{Symmetry analysis}

The Lie and Noether symmetries for the constraint Hamiltonian systems with
Lagrangian of the form (\ref{sl.01}) have been investigated in details
before in \cite{ndl}. Owing to the constraint (\ref{sl.02}) the symmetry
analysis differs from that of the regular systems, see for example \cite%
{ansym1}. Indeed for regular dynamical systems and for the equations of
motion of the form (\ref{sl.07}) the Lie symmetries are constructed by the
elements of the projective algebra of the connection $\Gamma _{jk}^{i}$.
Additionally, Noether symmetries are linked to homothetic symmetries of the
metric tensor $g_{ij}$ \cite{ansym1}.

Indeed if $\Phi \left( t,q^{k},p^{k}\right) $ is a conservation law for the
dynamical system with Hamiltonian (\ref{sl.03}), it holds
\begin{equation}
\frac{d\Phi }{dt}\equiv \frac{\partial \Phi }{\partial t}+\{\Phi ,H\}=0,
\end{equation}%
where $\left\{ ,\right\} $ is the Poisson bracket.\

Nevertheless in order to make use of the Hamiltonian constraint the latter
condition can be relaxed such that as follows

\begin{equation}
\frac{d\Phi }{dt}\equiv \frac{\partial \Phi }{\partial t}+\{\Phi ,H\}=\chi
H\approx 0,  \label{sl.11}
\end{equation}%
in which $\chi $ is a conformal factor.

Thus from (\ref{sl.11}) it follows that the symmetries of constraint
Hamiltonian systems are determined by the conformal symmetries of the metric
tensor $g_{ij}$ \cite{ndl}$\,$.

\subsection{Geodesic description}

The process of geometrizing dynamical systems of the form (\ref{sl.01})
involves formulating the dynamical system as a set of geodesic equations,
where the potential term can be interpreted as part of the geometry. Two
different approaches for this are the Jacobi metric and the Eisenhart lift,
both of which incorporate the potential; that is, conservative forces, into
the geometry.

In Jacobi metric\ approach (for a recent discussion see \cite{pr1}), we
introduce $\tilde{N}=\frac{1}{V}$, such that the Lagrangian to be%
\begin{equation}
\bar{L}\left( q^{k},\dot{q}^{k}\right) =\frac{1}{2}\bar{g}_{ij}\dot{q}^{i}%
\dot{q}^{j}-1~,~\bar{g}_{ij}=V\left( q\right) g_{ij}.  \label{sl.12}
\end{equation}%
Thus, the constraint equation (\ref{sl.02}) becomes
\begin{equation}
\frac{1}{2}\bar{g}_{ij}\dot{q}^{i}\dot{q}^{j}+1=0  \label{sl.13}
\end{equation}%
which means that Lagrangian (\ref{sl.12}) describes the time/space -like
geodesic equations for the conformal metric $\bar{g}_{ij}$. \ It is well
known that the geodesic equations of the metric $\bar{g}_{ij}$ which can be
linearized are those of the flat space, that is, $\bar{g}_{ij}$ should be
flat.

The Jacobi metric is obtained through a conformal transformation where $\bar{%
g}{ij}$ is conformally related to the metric tensor $g{ij}$. Conformally
related spacetimes share the same conformal symmetries, implying that the
symmetries and conservation laws of the original dynamical system also exist
for the Jacobi metric. Notably, conformal symmetries in this context
transform into isometries. On the other hand, the Eisenhart lift approach
involves augmenting the dimensionality of the system \cite{el1, el2}.

Geometrization is achieved by introducing additional dimensions through new
dependent variables. The potential term becomes integrated into the newly
extended metric tensor. These extended spaces possess isometries that
correspond to Noetherian conservation laws for the geodesic equations. When
these conservation laws are applied within the extended system, the original
dynamics are recovered.

In the Eisenhart approach, we write the extended Lagrangian function%
\begin{equation}
L_{lift}\left( N,q^{k},\dot{q}^{k},z,\dot{z}\right) =\frac{1}{2N}\left(
g_{ij}\dot{q}^{i}\dot{q}^{j}+\frac{1}{V(q^{k})}\dot{z}^{2}\right) ,
\label{sl.14}
\end{equation}%
with constraint equation and conservation law
\begin{equation}
g_{ij}\dot{q}^{i}\dot{q}^{j}+\frac{1}{V(q^{k})}\dot{z}^{2}=0~,~\frac{1}{%
V\left( q^{k}\right) }\dot{z}=I_{0}.  \label{sl.15}
\end{equation}%
Thus by replacing the $\dot{z}=I_{0}V\left( q^{k}\right) $ in the equations
of motion and the constraint equation, the original system is recovered when
$\left( I_{0}\right) ^{2}=1$. \

Lagrangian function (\ref{sl.14}) describe a sets of null geodesic equations
in the extended manifold with metric
\begin{equation}
\hat{g}_{AB}\left( q^{k},z\right) =g_{ij}\left( q^{k}\right) +\frac{1}{%
V\left( q^{k}\right) }\delta _{A}^{z}\delta _{B}^{z},
\end{equation}%
with $A,B=i,j..,z$, are the indices of the extended geometry.

\section{Geometric Linearization}

\label{sec4}

In this section, we will establish new geometric criteria for the global
linearization of the equations of motion described by the Lagrangian (\ref%
{sl.01}).

We will utilize the two geometrization approaches described previously.
Specifically, for the equivalent Lagrangian (\ref{sl.12}) obtained through
the Jacobi metric approach, we will employ the Eisenhart lift. This allows
us to describe the dynamical system using the equivalent/extended Lagrangian
function
\begin{equation}
\bar{L}_{lift}=\frac{1}{2}\bar{g}_{ij}\dot{q}^{i}\dot{q}^{j}+\frac{1}{2}\dot{%
z}^{2}  \label{sl.16}
\end{equation}%
with constraints
\begin{equation}
\bar{g}_{ij}\dot{q}^{i}\dot{q}^{j}+\dot{z}^{2}=0~,~\dot{z}=\pm 1.
\label{sl.17}
\end{equation}

Lagrangian (\ref{sl.16}) describes the equations of motion for the null
geodesics of the extended metric tensor $\tilde{g}_{AB}\left( q^{k},z\right)
=\bar{g}_{ij}\left( q^{k}\right) +\delta _{A}^{z}\delta _{B}^{z}$.

Consequently, if the metric tensor $\tilde{g}_{AB}$ is conformally flat then
there exist a point transformation $\left( q^{A},z\right) =Q^{A}$ such that $%
\tilde{g}_{AB}=e^{2U}\eta _{AB}$, where $\eta _{AB}$ is the flat space in
diagonal coordinates. Thus, Lagrangian (\ref{sl.16}) and the constraint (\ref%
{sl.17}) becomes%
\begin{equation}
\bar{L}_{lift}=e^{2U}\eta _{AB}\dot{Q}^{A}\dot{Q}^{B},  \label{sl.18}
\end{equation}%
\begin{equation}
\eta _{AB}\dot{Q}^{A}\dot{Q}^{B}=0.  \label{sl.19}
\end{equation}%
Therefore after the change of the independent variable $d\tau =e^{-2U}dt$,
we end with the equivalent geodesic equations%
\begin{equation}
\ddot{Q}^{A}=0~,  \label{sl.20}
\end{equation}%
which are the equations of motion for the free particle in the flat space.

Thus, regarding the linearization of a dynamical system described by the
singular Lagrangian function (\ref{sl.01}) the following theorem states.

\begin{theorem}
\label{Theo1}The $n$-dimensional, \thinspace with $n\geq 2\,,$ constraint
Hamiltonian system described by the Lagrangian function (\ref{sl.01}) is
globally linearizable if one of the following equivalent statements are true

A) The admitted (non trivial) Noether symmetries of the constraint
Hamiltonian system are $\frac{n\left( n+1\right) }{2}$.

B) The Jacobi metric $\bar{g}_{ij}$ is maximally symmetric.

C)\ The extended $1+n$ decomposable space with metric $\tilde{g}_{AB}$ is
conformally flat.
\end{theorem}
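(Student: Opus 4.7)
The plan is to show the equivalence of (A), (B), (C) and note that the implication from (C) to linearization has essentially already been carried out in the derivation leading into the theorem: once $\tilde{g}_{AB} = e^{2U}\eta_{AB}$ is conformally flat, the steps (\ref{sl.18})--(\ref{sl.20}), namely the rewriting of the lifted Lagrangian in flat coordinates together with the time reparametrization $d\tau = e^{-2U}dt$ on the null cone, reduce the dynamics to $\ddot{Q}^A = 0$. So the real work is establishing the equivalences (A) $\Leftrightarrow$ (B) $\Leftrightarrow$ (C).

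For (A) $\Leftrightarrow$ (B) I would use the Noether analysis of singular Lagrangians of the form (\ref{sl.01}) recalled in Section 3.1 (cf. \cite{ndl}): the relaxed conservation condition (\ref{sl.11}) implies that the non-trivial Noether symmetries of the constraint Hamiltonian system are in bijection with the conformal Killing vectors of $g_{ij}$ whose conformal factor $\chi$ can be absorbed into the lapse $N$. The Jacobi gauge $\tilde{N} = 1/V$ performs exactly the conformal rescaling $g_{ij} \mapsto V g_{ij} = \bar{g}_{ij}$ that converts the conformal Killing equation for $g_{ij}$ into the isometry equation for $\bar{g}_{ij}$. Hence the count of non-trivial Noether symmetries equals $\dim\,\mathrm{Isom}(\bar g)$, which in an $n$-dimensional pseudo-Riemannian manifold is bounded by $\tfrac{n(n+1)}{2}$, the bound being saturated precisely when $\bar g$ is of constant sectional curvature, i.e., maximally symmetric.

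For (B) $\Leftrightarrow$ (C) I would invoke the classical characterization of conformally flat direct-product pseudo-Riemannian metrics. Since $\tilde g_{AB} = \bar g_{ij}(q) \oplus dz^2$ is a decomposable product with a one-dimensional factor, the block-decomposition of the Weyl tensor (for $n+1\geq 4$) or of the Cotton tensor (for $n+1=3$) on the product forces the $n$-dimensional factor $\bar g_{ij}$ to be a space form, and conversely any Riemannian or Lorentzian space form crossed with $\mathbb{R}$ is conformally flat (the explicit conformal factor can be exhibited in stereographic-type coordinates). Combined with the statement of (B) that $\bar g$ is maximally symmetric, which for $n\geq 2$ is equivalent to constant sectional curvature, this closes the circle.

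The main obstacle is the Weyl/Cotton computation for (B) $\Leftrightarrow$ (C): one must verify in all signatures of interest that a decomposable $(1+n)$-metric with a flat one-dimensional slot is conformally flat \emph{iff} the $n$-dimensional factor has constant curvature, and handle the dimension $n=2$ case (where the Weyl tensor vanishes identically, forcing one to use the Cotton tensor) separately. A secondary subtlety is being careful about what "non-trivial" means in (A): the time-reparametrization invariance of (\ref{sl.01}) produces gauge-type Noether symmetries that must be excluded from the count, so that the bijection with $\mathrm{Isom}(\bar g)$ is exact and the number $\tfrac{n(n+1)}{2}$ is attained precisely in the maximally symmetric case.
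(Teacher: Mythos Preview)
Your proposal is correct and follows essentially the same route as the paper's own proof: the implication (C) $\Rightarrow$ linearizable is the pre-theorem computation (\ref{sl.18})--(\ref{sl.20}); the equivalence (B) $\Leftrightarrow$ (C) is the product-metric characterization (the paper simply cites Proposition~2 of \cite{mtgrg} rather than sketching the Weyl/Cotton block computation you outline); and (A) $\Leftrightarrow$ (B) is obtained by identifying the Noether symmetries of the constraint system with the isometries of the Jacobi metric $\bar g_{ij}=Vg_{ij}$, exactly as the paper does via the remark that conformal symmetries of $g_{ij}$ compatible with (\ref{sl.11}) become isometries after the Jacobi rescaling. Your additional care about the $n=2$ Cotton case and about excluding the reparametrization gauge symmetry from the count is warranted and is handled in the paper only implicitly (the former reappears in Corollary~\ref{corol1}), but it does not change the strategy.
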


\begin{proof}
All statements of Theorem \ref{Theo1} are equivalent. As we discussed before
if the extended $1+n$ decomposable space with metric $\tilde{g}_{AB}$ is
conformally flat then the equations of motions are linearizable through
point transformations in the space $\left\{ q^{A},z\right\} $.

Nevertheless, $\tilde{g}_{AB}$ is conformally flat if and only if the $n$%
-dimensional space with metric $\bar{g}_{ij}$ is maximally symmetric (see
Proposition two in \cite{mtgrg}).

However, when $\bar{g}_{ij}$ is maximally symmetric it means $\bar{g}_{ij}$
admits $\frac{n\left( n+1\right) }{2}$ isometries, as many as the Noether
symmetries for the original Lagrangian (\ref{sl.01}).

The inverse proof is straightforward.
\end{proof}

It is important to note that two-dimensional spaces are maximally symmetric
when they have constant curvature. Conversely, three-dimensional spaces are
conformally flat when the Cotton-York tensor vanishes. Therefore, for the
case of two-dimensional systems, from Theorem \ref{Theo1}, we derive the
following corollary.

\begin{corollary}
\label{corol1}The two-dimensional constraint Hamiltonian systems described
by the Lagrangian function (\ref{sl.01}) are globally linearizable if one of
the following equivalent states are true

A) The admitted Noether symmetries of the constraint Hamiltonian system are
three.

B) There exist a point transformation where the two-dimensional Jacobi
metric can be written as $\bar{g}_{ij}=diag\left( e^{2U\left( x,y\right)
},e^{2U\left( x,y\right) }\right) $, and $U\left( x,y\right) $ is a solution
of the equation $U_{,xx}+U_{,yy}+2\kappa e^{2U}=0$, in which $\kappa $ is
the curvature of the two-dimensional space and $V\left( x,y\right)
=e^{-2U\left( x,y\right) }$

C) The Cotton-York tensor $C_{\mu \nu \kappa }=R_{\mu \nu ;\kappa
}-R_{\kappa \nu ;\mu }+\frac{1}{4}\left( R_{;\nu }g_{\mu \kappa }-R_{;\kappa
}g_{\mu \nu }\right) $ for the three-dimensional extended metric $\tilde{g}%
_{AB}$ has zero components.
\end{corollary}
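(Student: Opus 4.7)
The plan is to derive Corollary \ref{corol1} as the specialisation of Theorem \ref{Theo1} to $n=2$, combined with two classical results of low-dimensional (pseudo-)Riemannian geometry. Since Theorem \ref{Theo1} already establishes the equivalence of the three abstract conditions (existence of $\tfrac{n(n+1)}{2}$ Noether symmetries, maximal symmetry of the Jacobi metric, and conformal flatness of the lifted metric), the work reduces to rewriting each of those abstract conditions in the concrete form demanded by (A), (B), (C) of the corollary.

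Statement (A) is immediate: substituting $n=2$ into $\tfrac{n(n+1)}{2}$ from part (A) of Theorem \ref{Theo1} gives exactly three Noether symmetries. For statement (B), I would invoke the classical result that every two-dimensional (pseudo-)Riemannian metric admits isothermal coordinates locally, so that after a suitable point transformation the Jacobi metric takes the form $\bar{g}_{ij}=\mathrm{diag}(e^{2U(x,y)},e^{2U(x,y)})$. By part (B) of Theorem \ref{Theo1}, linearizability is equivalent to $\bar{g}_{ij}$ being maximally symmetric, and in two dimensions maximal symmetry coincides with constant Gaussian curvature. Computing the scalar curvature of the conformally flat metric $e^{2U}(dx^{2}+dy^{2})$ yields an expression proportional to $e^{-2U}(U_{,xx}+U_{,yy})$; equating this to a prescribed constant delivers the Liouville-type equation $U_{,xx}+U_{,yy}+2\kappa e^{2U}=0$. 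The identification $V(x,y)=e^{-2U(x,y)}$ then follows from the defining relation $\bar{g}_{ij}=Vg_{ij}$, once $g_{ij}$ is fixed in the canonical form induced by the same coordinate choice.

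For statement (C), I would appeal to the standard characterisation of conformally flat three-manifolds: the Weyl tensor vanishes identically in dimension three, so the only remaining obstruction to conformal flatness is carried by the Cotton tensor, equivalently the Cotton--York tensor $C_{\mu\nu\kappa}$ written in the corollary. Because the Eisenhart-lifted metric $\tilde{g}_{AB}$ has dimension $1+n=3$ when $n=2$, part (C) of Theorem \ref{Theo1} translates verbatim into the vanishing of $C_{\mu\nu\kappa}$.

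The main obstacle is calculational rather than conceptual: one must track the sign and normalisation conventions carefully to confirm that the constant-curvature condition produces precisely the coefficient $2\kappa$ in the Liouville equation as stated, and one must verify that the point transformations implementing the isothermal form of $\bar{g}_{ij}$ and the conformal flattening of $\tilde{g}_{AB}$ can be chosen coherently. Both are routine once the conventions of Theorem \ref{Theo1} are pinned down.
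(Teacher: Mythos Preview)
Your proposal is correct and follows essentially the same route as the paper. The paper does not supply a separate proof for Corollary \ref{corol1}; it simply remarks, in the paragraph preceding the statement, that two-dimensional spaces are maximally symmetric precisely when they have constant curvature and that three-dimensional spaces are conformally flat precisely when the Cotton--York tensor vanishes, and then declares the corollary as the $n=2$ specialisation of Theorem \ref{Theo1}. Your argument spells out these same two geometric facts in more detail (isothermal coordinates, the Liouville equation for constant curvature, the Cotton tensor as the conformal-flatness obstruction in dimension three), so it is a fuller version of exactly what the paper intends.
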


\begin{corollary}
\label{corol2}The one-dimensional constraint Hamiltonian system described by
the Lagrangian function (\ref{sl.01}) is always globally linearizable for
arbitrary potential function.
\end{corollary}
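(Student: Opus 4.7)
The plan is to apply the same Eisenhart lift construction used in the proof of Theorem~\ref{Theo1} to the degenerate case $n=1$, where the geometry simplifies so dramatically that the hypotheses of that theorem become vacuous. For $n=1$ the metric component is a single function $g(q)$, the Jacobi rescaling gives the one-dimensional effective metric $\bar{g}(q) = V(q)\, g(q)$, and the Eisenhart lift (\ref{sl.16})--(\ref{sl.17}) produces the two-dimensional decomposable extended metric
\begin{equation*}
\tilde{g}_{AB}\, dQ^{A} dQ^{B} = V(q)\, g(q)\, dq^{2} + dz^{2}.
\end{equation*}

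Next, I would observe that this extended metric is in fact not merely conformally flat but genuinely flat. Introducing the change of coordinate $x(q) = \int \sqrt{\lvert V(q)\, g(q) \rvert}\, dq$, which is a diffeomorphism on any interval where $V(q)\, g(q)$ is non-vanishing, the extended metric takes the form $\tilde{g}_{AB}\, dQ^{A} dQ^{B} = \pm\, dx^{2} + dz^{2}$. This is the flat plane, and so the argument in the proof of Theorem~\ref{Theo1} applies directly: the geodesic equations of the extended metric (\ref{sl.16}) reduce to the free-particle equations
\begin{equation*}
\ddot{x} = 0, \qquad \ddot{z} = 0,
\end{equation*}
subject to the null constraint (\ref{sl.17}). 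Inverting $x(q)$ then recovers the analytic solution of the original one-dimensional constraint Hamiltonian system in terms of the linear solutions.

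An alternative phrasing of the same argument, which I would mention briefly, is that every two-dimensional (pseudo-)Riemannian manifold is locally conformally flat, so the condition (C) of Theorem~\ref{Theo1} is automatic in the extended $(1+1)$-dimensional geometry regardless of the choice of $V(q)$. The main (and really only) point to be careful about is the sign of $V(q)\, g(q)$, which determines whether the extended Eisenhart geometry is Riemannian or Lorentzian and therefore whether the relevant geodesics of (\ref{sl.16}) are non-null or null; either way the linearization procedure is unobstructed. I do not expect any genuine obstacle here: the one-dimensional case is essentially trivial once the Eisenhart-lift viewpoint of Theorem~\ref{Theo1} is in place, and that is precisely the content of the corollary.
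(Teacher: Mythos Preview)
Your proposal is correct and follows essentially the same route as the paper: apply the Jacobi rescaling and the Eisenhart lift to obtain a two-dimensional decomposable extended metric, observe that it is flat (or at worst conformally flat, since every two-dimensional metric is), and flatten it explicitly via the arclength-type coordinate change $dx=\sqrt{|\bar g(q)|}\,dq$ to reduce to free-particle equations. The only cosmetic differences are that the paper specializes to $g(q)=1$ from the outset and additionally remarks that already the one-dimensional Jacobi metric itself is flat, whereas you keep a general $g(q)$ and add a brief remark on signature; neither changes the substance of the argument.
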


\begin{proof}
Consider the one-dimensional constraint Hamiltonian system with Lagrangian
function $L\left( N,q,\dot{q}\right) =\frac{1}{2N}\dot{q}^{2}-NV\left(
q\right) $, then after the application of Jacobi's and Eisenhart's methods
we find the equivalent dynamical system of null geodesics for the
two-dimensional space. All two-dimensional spaces are conformally flat which
means that the null geodesics can be globally linearizable. In particular
the extended two-dimensional space is described by the decomposable line
element $ds^{2}=\frac{1}{V\left( q\right) }dq^{2}+dz^{2}$. $\ $Thus, under
the change of variable $\sqrt{\frac{1}{V\left( q\right) }}dq=dY$, it follows
$ds^{2}=dY^{2}+dz^{2}$, which is the flat space.

On the other hand, the Jacobi metric approach leads to the equivalent system
of time/space -like geodesic equations for the one-dimensional line element $%
d\bar{s}^{2}=\frac{1}{V\left( q\right) }dq^{2}$. Because the one-dimensional
space is the flat space the point transformation $\sqrt{\frac{1}{V\left(
q\right) }}dq=dY$ leads to the linearization of the dynamical system.
\end{proof}

\section{Applications}

\label{sec5}

In the following lines, we demonstrate the application of Theorem \ref{Theo1}
for the geometric linearization of Hamiltonian systems of special interests.

\subsection{Exponential interaction}

Consider two particles which interact with an exponential law, similar to
that of the Toda lattice. The Lagrangian which describes the dynamical
system is%
\begin{equation}
L\left( q_{1},\dot{q}_{1},q_{2},\dot{q}_{2}\right) =\frac{1}{2}\left( \dot{q}%
_{1}^{2}+\dot{q}_{2}^{2}\right) -V_{0}e^{2\gamma \left( q_{1}-q_{2}\right) }.
\end{equation}%
where the Hamiltonian function is $H=h$.\newline

We introduce the equivalent singular Lagrangian%
\begin{equation}
\bar{L}\left( N,q_{1},\dot{q}_{1},q_{2},\dot{q}_{2}\right) =\frac{1}{2N}%
\left( \dot{q}_{1}^{2}+\dot{q}_{2}^{2}\right) -N\left(
V_{0}e^{q_{1}-q_{2}}-h\right) ,
\end{equation}%
with constraint equation%
\begin{equation}
\bar{H}\equiv \frac{1}{2}\left( p_{1}^{2}+p_{2}^{2}\right)
+V_{0}e^{q_{1}-q_{2}}-h=0,
\end{equation}%
in which $p_{1}=\dot{q}_{1}$,~$p_{2}=\dot{q}_{2}$.

We employ the Eisenhart lift and we write the equivalent Lagrangian of the
form (\ref{sl.14}), that is,
\begin{equation}
\bar{L}_{lift}\left( N,q_{1},\dot{q}_{1},q_{2},\dot{q}_{2}\right) =\frac{1}{%
2N}\left( \left( \dot{q}_{1}^{2}+\dot{q}_{2}^{2}\right) +\frac{1}{\left(
V_{0}e^{q_{1}-q_{2}}-h\right) }\dot{z}^{2}\right) ,
\end{equation}%
where the extended three-dimensional metric reads%
\begin{equation}
ds_{lift}^{2}=dq_{1}^{2}+dq_{2}^{2}+\frac{1}{\left(
V_{0}e^{q_{1}-q_{2}}-h\right) }dz^{2}.
\end{equation}%
We calculate the Cotton-York tensor and we found that it is zero when $h=0$.
Thus, from Corollary \ref{corol1} it follows that the dynamical system can
be written in the equivalent form of the three-dimensional free particle of
the flat geometry.

Under the he conformal transformation $d\bar{s}_{lift}^{2}=\frac{1}{XY}%
ds_{lift}^{2}~$and the change of variables%
\begin{eqnarray}
ix &=&-\frac{1}{4}\left( 1+i\right) \ln Y+\frac{1}{4}\left( 1-i\right) \ln X,
\\
iy &=&\frac{1}{4}\left( 1-i\right) \left( i\ln X-\ln Y\right) ,
\end{eqnarray}%
the extended space $d\bar{s}_{lift}^{2}$ becomes%
\begin{equation}
ds_{lift}^{2}=\frac{1}{2}dXdY+dz^{2}.
\end{equation}%
where the null geodesics read%
\begin{equation}
\ddot{X}=0~,\ddot{Y}=0\text{ },~\ddot{z}=0,
\end{equation}%
and constraint equation%
\begin{equation}
\frac{1}{2}\dot{X}\dot{Y}+\dot{z}^{2}=0\text{.}
\end{equation}

\subsection{Two-dimensional oscillator with corrections}

We assume the following singular Lagrangian, which describes a
two-dimensional oscillator with correction terms, that is,
\begin{equation}
L\left( N,x,\dot{x},y,\dot{y}\right) =\frac{1}{2N}\left( \dot{x}^{2}+\dot{y}%
^{2}\right) +N\left( 1+\frac{\kappa }{4}\left( x^{2}+y^{2}\right) \right)
^{2}.  \label{ap.11}
\end{equation}

We calculate the Jacobi metric, that is, the corresponding line element is
of the form
\begin{equation}
ds_{Jacobi}^{2}=\frac{1}{\left( 1+\frac{\kappa }{4}\left( x^{2}+y^{2}\right)
\right) ^{2}}\left( dx^{2}+dy^{2}\right)  \label{ap.12}
\end{equation}%
from where we observe that is a space of constant $\kappa $ nonzero
curvature, that it, it is a maximally symmetric space.

We employ Eisenhart's lift and we write the equivalent extended Lagrangian
\begin{equation}
L_{lift}\left( N,x,\dot{x},y,\dot{y}\right) =\frac{1}{2}\frac{\dot{x}^{2}+%
\dot{y}^{2}}{\left( 1+\frac{\kappa }{4}\left( x^{2}+y^{2}\right) \right) ^{2}%
}+\frac{1}{2}\dot{z}^{2},  \label{ap.13}
\end{equation}%
with constraint
\begin{equation}
\frac{1}{2}\frac{\dot{x}^{2}+\dot{y}^{2}}{\left( 1+\frac{\kappa }{4}\left(
x^{2}+y^{2}\right) \right) ^{2}}+\frac{1}{2}\dot{z}^{2}=0.  \label{ap.14}
\end{equation}

For the three-dimensional extended space with line element
\begin{equation}
ds_{Lift}^{2}=ds_{Jacobi}^{2}+dz^{2}  \label{ap.15}
\end{equation}%
we calculate the Cotton-York tensor components which are zero. Hence, the
line element (\ref{ap.15}) is conformally flat.

Under the change of variables
\begin{equation}
x=\frac{X\left( Z+\sqrt{Z^{2}+16\kappa \left( X^{2}+Y^{2}\right) }\right) }{%
2\kappa \left( X^{2}+Y^{2}\right) }~,~y=\frac{Y\left( Z+\sqrt{Z^{2}+16\kappa
\left( X^{2}+Y^{2}\right) }\right) }{2\kappa \left( X^{2}+Y^{2}\right) },
\label{ap.16}
\end{equation}%
\begin{equation}
e^{\sqrt{K}z}=2\kappa \frac{\left( X^{2}+Y^{2}\right) }{Z+\sqrt{%
Z^{2}+16\kappa \left( X^{2}+Y^{2}\right) }}\left( \frac{8\left(
X^{2}+Y^{2}\right) }{\left( Z+\sqrt{Z^{2}+16\kappa \left( X^{2}+Y^{2}\right)
}\right) }+Z\right) ,  \label{ap.17}
\end{equation}%
the three-dimensional space (\ref{ap.15}) \ reads%
\begin{equation}
ds_{Lift}^{2}=e^{\sqrt{K}z\left( X,Y,Z\right) }\left(
dX^{2}+dY^{2}+dZ^{2}\right) ,  \label{ap.18}
\end{equation}%
with null geodesic equations%
\begin{equation}
\ddot{X}=0~,~\ddot{Y}=0~,\text{ }\ddot{Z}=0,  \label{ap.19}
\end{equation}%
and constraint $\dot{X}^{2}+\dot{Y}^{2}+\dot{Z}^{2}=0$. \

This two-dimensional linearizable example can be generalized to $n-$%
dimensional system in a similar way.

\begin{corollary}
\label{corol3}The $n-$dimensional dynamical system with Lagrangian
\begin{equation}
L\left( N,q^{k},\dot{q}^{k}\right) =\frac{1}{2N}\left( \eta _{ij}\dot{q}^{i}%
\dot{q}^{j}\right) +N\left( 1+\frac{\kappa }{4}\left( \eta
_{ij}q^{i}q^{j}\right) \right) ^{2},
\end{equation}%
can be linearized though the Jacobi metric and the Eisenhart lift.
\end{corollary}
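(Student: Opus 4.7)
The plan is to reduce the claim to an immediate application of Theorem~\ref{Theo1} by verifying criterion (B): that the Jacobi metric associated with the given Lagrangian is maximally symmetric. This is the cheapest of the three equivalent conditions, because (A) and (C) then follow without a separate symmetry count or a Weyl-tensor computation.

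First, I would compute the Jacobi metric. Matching the given Lagrangian with the template (\ref{sl.01}) one reads off $g_{ij}=\eta_{ij}$ and the potential $V(q)=-\bigl(1+\tfrac{\kappa}{4}\eta_{ij}q^{i}q^{j}\bigr)^{2}$. Then, exactly as in the two-dimensional calculation leading to (\ref{ap.12}) in Section~\ref{sec5}, the construction of Section~\ref{sec3} yields the Jacobi line element
\begin{equation*}
d\bar{s}^{2}_{\text{Jacobi}} \;=\; \frac{\eta_{ij}\,dq^{i}\,dq^{j}}{\bigl(1+\tfrac{\kappa}{4}\eta_{ij}q^{i}q^{j}\bigr)^{2}}.
\end{equation*}
This is precisely the standard stereographic-coordinate representation of the $n$-dimensional space of constant sectional curvature $\kappa$ (spherical for $\kappa>0$, hyperbolic for $\kappa<0$, Euclidean for $\kappa=0$). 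A short direct calculation of the Riemann tensor confirms $\bar{R}_{ijkl}=\kappa\bigl(\bar g_{ik}\bar g_{jl}-\bar g_{il}\bar g_{jk}\bigr)$, so $\bar g_{ij}$ is maximally symmetric and admits the full $\tfrac{n(n+1)}{2}$-dimensional isometry algebra.

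Second, statement (B) of Theorem~\ref{Theo1} then gives global linearizability at once. As a by-product, statement (C) must also hold, i.e.\ the Eisenhart lift
\begin{equation*}
d\tilde{s}^{2} \;=\; \frac{\eta_{ij}\,dq^{i}\,dq^{j}}{\bigl(1+\tfrac{\kappa}{4}\eta_{ij}q^{i}q^{j}\bigr)^{2}}\,+\,dz^{2}
\end{equation*}
is conformally flat, generalising the Cotton--York computation performed for the two-dimensional example. If one wishes to exhibit an explicit linearising chart rather than merely invoke the theorem, the change of variables (\ref{ap.16})--(\ref{ap.17}) extends to $n$ dimensions in a routine way by projecting stereographic coordinates on the maximally symmetric factor onto null affine coordinates on the lift.

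The main obstacle is essentially bookkeeping: in the two-dimensional worked example, conformal flatness of the three-dimensional lift was certified by the vanishing of the Cotton--York tensor, but for $n+1\ge 4$ that tensor no longer suffices and one would have to appeal to the Weyl tensor of a product of a constant-curvature $n$-space with $\mathbb{R}$. Routing the argument through criterion (B) of Theorem~\ref{Theo1} sidesteps this issue entirely, since maximal symmetry of $\bar g_{ij}$ is visible by inspection of the Jacobi line element and requires no dimension-dependent conformal-curvature computation.
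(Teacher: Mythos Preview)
Your proposal is correct and follows essentially the same route that the paper takes implicitly. In the paper, Corollary~\ref{corol3} is stated without its own proof; the surrounding text simply says that the two-dimensional example ``can be generalized to $n$-dimensional system in a similar way,'' after having observed in (\ref{ap.12}) that the Jacobi metric is a space of constant curvature $\kappa$ and hence maximally symmetric. Your argument makes this generalization explicit by recognizing the Jacobi line element as the standard stereographic form of the $n$-dimensional constant-curvature space and then invoking criterion (B) of Theorem~\ref{Theo1}, which is precisely the mechanism the paper intends. Your remark that routing through criterion (B) avoids a dimension-dependent Weyl/Cotton--York computation for the lift is a nice clarification that the paper leaves unsaid.
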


\subsection{The Szekeres system}

Consider the two-dimensional dynamical system described by the regular
Lagrangian function \cite{anszek,qq}%
\begin{equation}
L_{R}\left( u,\dot{u},v,\dot{v}\right) =\dot{u}\dot{v}-\frac{v}{u^{2}},
\label{ap.01}
\end{equation}%
and equations of motion%
\begin{equation}
\ddot{u}+u^{-2}=0~,~\ddot{v}-2vu^{-3}=0.  \label{ap.02}
\end{equation}%
Furthermore, the Hamiltonian function is
\begin{equation}
H\equiv \dot{u}\dot{v}+\frac{v}{u^{2}}=h  \label{ap.03}
\end{equation}%
where $h$ is a constant.

The system described above corresponds to the dynamics governed by the
Einstein field equations for Szekeres spacetimes, commonly referred to as
the Szekeres system \cite{szek0}. Szekeres spacetimes represent exact
inhomogeneous solutions with a dust fluid and find various applications in
gravitational physics. For more detailed information, the interested reader
is referred to \cite{andrez}.

We introduce the equivalent singular Lagrangian function%
\begin{equation}
L\left( N,u,\dot{u},v,\dot{v}\right) =\frac{\dot{u}\dot{v}}{N}-N\left( \frac{%
v}{u^{2}}-h\right) ,  \label{ap.04}
\end{equation}%
such that to write the original system in the form of a constraint
Hamiltonian dynamical system.

The latter singular Lagrangian for arbitrary parameter $h$ has not any
Noetherian symmetry. In particular it admits a hidden symmetry related to a
quadratic conservation law \cite{anszek}. However, for $h=0$, Lagrangian (%
\ref{ap.04}) admits three Noether symmetries%
\begin{equation}
X_{1}=\frac{1}{v}\partial _{v}~,~X_{2}=u^{2}\partial _{u}~,~X_{3}=v\partial
_{v}+2u\partial _{u}.  \label{ap.05}
\end{equation}%
Thus, according to the first statement of Theorem \ref{Theo1}, on the
surface with $h=0$, the equations of motions can be linearized. Moreover,
the solution space with initial conditions $h=0$, is equivalent to the
equations of motion for the two-dimensional flat space.

Indeed, we employ Jacobi's approach and from (\ref{ap.04}) we define the
equivalent geodesic Lagrangian%
\begin{equation}
\bar{L}\left( u,\dot{u},v,\dot{v}\right) =\frac{v}{u^{2}}\dot{u}\dot{v}-1.
\label{ap.06}
\end{equation}%
Hence, under the change of variables
\begin{equation}
\frac{du}{u^{2}}=dU,~vdv=dV,  \label{ap.06a}
\end{equation}%
it follow%
\begin{equation}
L\left( U,\dot{U},V,\dot{V}\right) =\dot{U}\dot{V}-1,  \label{ap.07}
\end{equation}%
with equations of motion%
\begin{equation}
\ddot{U}=0,~\ddot{V}=0
\end{equation}%
and constraint equation
\begin{equation}
\dot{U}\dot{V}+1=0.
\end{equation}

\subsection{The Cosmological Constant in Szekeres model}

The introduction of the cosmological constant in the Szekeres model \cite%
{ccbar} leads to the modification of the regular Lagrangian (\ref{ap.01}).

Specifically, the new dynamics follows from the Lagrangian function \cite%
{zzbar}
\begin{equation}
L_{R}^{\Lambda }\left( u,\dot{u},v,\dot{v}\right) =\dot{u}\dot{v}-\left(
\frac{v}{u^{2}}-\Lambda uv\right) ,  \label{ap.08}
\end{equation}%
in which $\Lambda $ is the cosmological constant term.

In a similar approach with before we introduce the singular Lagrangian
function%
\begin{equation}
L^{\Lambda }\left( N,u,\dot{u},v,\dot{v}\right) =\dot{u}\dot{v}-\left( \frac{%
v}{u^{2}}-\Lambda uv-h\right)  \label{ap.09}
\end{equation}

Thus, the Jacobi metric is defined as%
\begin{equation}
ds_{Jacobi}^{2}=\frac{1}{\left( \frac{v}{u^{2}}-\Lambda uv-h\right) }dudv.
\label{ap.10}
\end{equation}%
We calculate the Ricciscalar for the two-dimensional space, it is,
\begin{equation}
R_{Jacobi}=-\frac{4\left( 2+\Lambda u^{3}\right) h}{u\left( v\left( \Lambda
u^{3}-1\right) +hu^{2}\right) }.
\end{equation}
Hence, for $h=0$, the two-dimensional Jacobi metric (\ref{ap.10}) describes
the flat space, that is, the equations of motion can be linearized though a
point transformation.

Indeed, under the point transformation
\begin{equation}
\frac{du}{\left( \frac{1}{u^{2}}-\Lambda u\right) }=dU~,~\frac{dv}{v}=dV,
\end{equation}
the Jacobi metric reads
\begin{equation}
ds_{Jacobi}^{2}=dUdV,
\end{equation}%
which leads to the geodesic equations of the flat space, that is,
\begin{equation}
\ddot{U}=0~,~\ddot{V}=0.
\end{equation}

\subsection{Static spherical symmetric spacetime with charge}

Einstein's gravitational field equations for the a static spherical
symmetric spacetime with a charge are described by the variation of the
singular Lagrangian \cite{dim1}%
\begin{equation}
L^{RN}\left( N,a,a^{\prime },b,b^{\prime },\zeta ,\zeta ^{\prime }\right) =%
\frac{1}{2N}\left( 8ba^{\prime }b^{\prime }+4ab^{\prime 2}+4\frac{b^{2}}{a}%
\zeta ^{\prime 2}\right) +2Na,  \label{rn.01}
\end{equation}%
where $N,a,b$ are the scale factors of the background geometry with line
element%
\begin{equation}
ds^{2}=-a\left( r\right) ^{2}dt^{2}+N\left( r\right) ^{2}dr^{2}+b^{2}\left(
r\right) \left( d\theta ^{2}+\sin ^{2}\theta d\phi ^{2}\right) ,
\end{equation}%
and a prime means total derivative with respect the independent parameter $r$%
, i.e. $a^{\prime }=\frac{da}{dr}$. \ Function $\zeta $ is related to the
charge. The solution of the field equations is known as the Reissner-Nordstr%
\"{o}m black hole \cite{sch3,sch4}

The kinetic term of the singular Lagrangian (\ref{rn.01}) is defined by a
three-dimensional space. The admitted Noether symmetries of Lagrangian (\ref%
{rn.01}) are calculated to be six \cite{dim1}, they are \cite{comm}
\begin{equation*}
X^{1}=\frac{1}{ab}\partial _{a}~,~X^{2}=-a\partial _{a}+b\partial
_{b}-z\partial _{\zeta }~,~
\end{equation*}%
\begin{equation*}
X^{3}=-\left( \frac{a}{2b}+\frac{z^{2}}{ab}\right) \partial _{a}+\partial
_{b}-\frac{\zeta }{b}\partial _{\zeta }~,
\end{equation*}%
\begin{equation*}
X^{4}=-a\zeta \partial _{a}+b\zeta \partial _{b}+\left( \frac{a^{2}}{4}-%
\frac{z^{2}}{2}\right) \partial _{\zeta },
\end{equation*}%
\begin{equation*}
X^{5}=\frac{2\zeta }{ab}\partial _{a}+\frac{1}{b}\partial _{\zeta
}~,~X^{6}=\partial _{\zeta }.
\end{equation*}

Therefore, case A of Theorem \ref{Theo1} states the field equations can be
written in the equivalent form of the free particle. This transformation
derived before in \cite{comm} where the common solution space for a large
families of gravitational models investigated.

We employ the Eisenhart lift and we write the extended line element%
\begin{equation}
ds_{lift}^{2~m}=\frac{1}{N}\left( 8b~da~db+4a~db^{2}+4\frac{b^{2}}{a}d\zeta
^{2}-\frac{d\psi ^{2}}{2a}\right)
\end{equation}%
and under the point transformation
\begin{equation}
a=\sqrt{\frac{X+Y}{X-Y}+\frac{z^{2}}{\left( X-Y\right) ^{2}}},~~\zeta =\frac{%
z}{\left( X-Y\right) ^{2}},
\end{equation}%
the latter line element becomes%
\begin{equation}
ds^{RN~2}=\frac{1}{n}\frac{X-Y}{\sqrt{X^{2}-Y^{2}+z^{2}}}\left(
4dX^{2}-4dY^{2}+4dz^{2}-d\psi ^{2}\right)
\end{equation}%
where the corresponding null geodesics are written in the linear form
\begin{equation}
X^{\prime \prime }=0~,~Y^{\prime \prime }=0,~z^{\prime \prime }=0~,~\psi
^{\prime \prime }=0.
\end{equation}%
and Hamiltonian constraint%
\begin{equation}
\dot{X}^{2}-\dot{Y}^{2}+\dot{z}^{2}-\frac{1}{4}\dot{\psi}^{2}=0.
\end{equation}

\section{Conclusions}

\label{sec6}

In this study, by using the Jacobi metric and the Eisenhart lift, we
established a new criterion for the global linearization of constrained
Hamiltonian systems. The requirements for the dynamical system are to be in
the form of (\ref{sl.01}) with the constraint expression (\ref{sl.02}). The $%
n$-dimensional dynamical system must admit $\frac{n\left( n+1\right) }{2}$
Noether point symmetries, which correspond to a number of $\frac{n\left(
n+1\right) }{2}$ independent conservation laws. This property indicates that
the given dynamical system, constrained by equation (\ref{sl.02}), posses
the property of superintegrability.

The linearization of the dynamical systems is achieved through geometry and
is based on the linearization of the equivalent system that describes
geodesic equations for an extended geometry. The main result of this
analysis is summarized in Theorem \ref{Theo1}, where three equivalent
statements related the number of admitted symmetries, and the geometric
characteristics of the Jacobi metric and of the extended Eisenhart metric
are given.

Two immediate results are given in Corollaries \ref{corol1} and \ref{corol2}%
. Corollary \ref{corol1} specialize the statements of Theorem \ref{Theo1}
for the case of two-dimensional dynamical systems, while Corollary \ref%
{corol2} states that all one-dimensional constraint Hamiltonian systems can
be globally linearized. This geometric linearization for the one-dimensional
systems is possible, because the two-dimensional extended Eisenhart space is
always conformally flat.

To demonstrate the application of this new geometric approach, we presented
a series of applications of special interest. We focused on two dynamical
systems of analytic mechanics and on some gravitational models. Specifically
we consider the Szekeres system, which describes inhomogeneous cosmological
models with or without the cosmological constant term, and the
Reissner-Nordstr\"{o}m black hole.

A criticism that can be made is that this specific algorithm fails in the
case of the simplest maximally symmetric system, which is that of the $n$%
-dimensional harmonic oscillator with Lagrangian
\begin{equation}
L\left( N,q^{k},\dot{q}^{k}\right) =\frac{1}{2}\eta _{ij}\dot{q}^{i}\dot{q}%
^{j}-\frac{\omega ^{2}}{2}\eta _{ij}q^{i}q^{j}.  \label{d.01}
\end{equation}

We should make it clear that the geometric linearization discussed in this
study is based on the existence of symmetries generated by conformal
symmetries of the metric tensor $g_{ij}$, which defines the kinetic term in (%
\ref{sl.01}). As far as the symmetries of the oscillator (\ref{d.01}) are
concerned, they are related to the projective algebra \cite{anpo} and not to
the conformal algebra. Nevertheless, only the isometries are the common
subalgebra for the projective and conformal algebras. An alternative
Eisenhart lift has been proposed in \cite{axioms}, where the oscillator is
reduced to the free particle. This is possible should the introduction of an
extended space which belongs to the pp-wave geometries. In this case the
dimension of the extended space is increased by two and the resulting
Eisenhart metric is conformally flat.

Thus, any dynamical system of the form (\ref{sl.01}) where an extended
Eisenhart metric can be constructed to be conformally flat, can be
linearized.\ Which means that Theorem \ref{Theo1} can be generalized for
other family of Eisenhart lifts. \

This work opens new directions in the study of nonlinear differential
equations and demonstrates that geometry is a powerful tool for the study of
dynamical systems.

A natural extension of this geometric consideration is for the study of the
Klein-Gordon equation which describes the quantum limit of the constraint
Hamiltonian system with Lagrangian (\ref{sl.01}).

Specifically, the quantization of the constraint Hamiltonian (\ref{sl.03})
leads to the Yamabe equation \cite{yam1}%
\begin{equation}
\hat{\Delta}\Psi +V\left( q^{k}\right) \Psi =0,  \label{d.02}
\end{equation}%
where
\begin{equation}
\hat{\Delta}=\Delta +\frac{n-2}{4\left( n-1\right) }R  \label{d.03}
\end{equation}%
and $\Delta =\frac{1}{\sqrt{\left\vert g\right\vert }}\frac{\partial }{%
\partial q^{i}}\left( \sqrt{g}g^{ij}\frac{\partial }{\partial x^{j}}\right) $%
\ is the usual Laplace operator and $R$\ is the Ricci scalar for the metric
tensor $g_{ij}$\ which defines the kinetic term and $n=\dim g_{ij}$. The
introduction of the second term in (\ref{d.03}) is necessary in order
equation the equation to be invariant under conformal transformations.

We demonstrate the application of the geometric approach in equation (\ref%
{d.02}) let us study the Yamabe equation for the Szekeres system (\ref{ap.04}%
) with $h=0$. The equation which describes the quantum system is%
\begin{equation}
\frac{u^{2}}{v}\left( \Psi _{,uv}-\frac{v}{u^{2}}\Psi \right) =0,~\Psi =\Phi
\left( u,v\right)  \label{d.04}
\end{equation}

Assume now equation
\begin{equation}
\frac{u^{2}}{v}\Phi _{,uv}+\Phi _{,zz}=0~,~\Phi =\Phi \left( u,v,z\right) .
\label{d.05}
\end{equation}%
The vector field $\partial _{z}-i\Phi \partial _{\Phi }$, is a Lie symmetry
for equation (\ref{d.05}). The corresponding invariants are $\left\{
u,v,\Phi =\Psi e^{iz}\right\} $. Thus by replacing in (\ref{d.05}) we end
with equation (\ref{d.04}).

Under the change of variables\ (\ref{ap.06a}), and $U=X+Y~,~V=X-Y$, equation
(\ref{d.05}) reads%
\begin{equation}
\Phi _{,XX}-\Phi _{,YY}+\Phi _{,zz}=0,  \label{d.06}
\end{equation}%
which is the wave equation for the three-dimensional flat space. Equation (%
\ref{d.06}) is maximally symmetric and admits ten Lie point symmetries plus
the infinity symmetries related to the infinity number of solutions of the
linear equation (for more details we refer to \cite{yam2}). On the other
hand, equation (\ref{d.04}) admits only three Lie point symmetries (plus the
infinity symmetries).Therefore, the symmetries of the maximally symmetric
equation (\ref{d.05}) which does not survive under the reduction with the
invariants $\left\{ u,v,\Phi =\Psi e^{iz}\right\} $, becomes nonlocal
symmetries, which can be used for the construction of new solutions for the
inhomogeneous equation (\ref{d.04}) related to solutions for the homogeneous
equation (\ref{d.06})

In a future work, we plan to investigate further this geometric
consideration for the analysis of other dynamical systems and of partial
differential equations.

\bigskip

\textbf{Data Availability Statements:} Data sharing is not applicable to
this article as no datasets were generated or analyzed during the current
study.

\bigskip

\textbf{Acknowledgements: }The author thanks the support of Vicerrector\'{\i}%
a de Investigaci\'{o}n y Desarrollo Tecnol\'{o}gico (Vridt) at Universidad
Cat\'{o}lica del Norte through N\'{u}cleo de Investigaci\'{o}n Geometr\'{\i}%
a Diferencial y Aplicaciones, Resoluci\'{o}n Vridt No - 096/2022 and Resoluci%
\'{o}n Vridt No - 098/2022.

\bigskip

\end{document}